\newtheorem{theorem}{Theorem}
\title{The Bow-Tie Centrality: \it A Novel Measure for Directed
  and Weighted Networks with an Intrinsic Node Property}
\author{James B. Glattfelder \\
{\small \it  Department of Banking and Finance, University of Zurich \vspace{-0.2cm}}\\ 
{\small \it   Andreasstrasse 15, 8015 Zurich, Switzerland \vspace{-0.2cm}}\\ 
{\small \it  james.glattfelder@uzh.ch \vspace{-0.2cm}}\\ 
}
\date{\small \today}
\begin{document}

\maketitle

\begin{abstract}
  Today, there exist many centrality measures for assessing the
  importance of nodes in a network as a function of their position and
  the underlying topology. One class of such measures builds on
  eigenvector centrality, where the importance of a node is derived
  from the importance of its neighboring nodes. For directed and
  weighted complex networks, where the nodes can carry some intrinsic
  property value, there have been centrality measures proposed that
  are variants of eigenvector centrality. However, these expressions
  all suffer from shortcomings. Here, an extension of such centrality
  measures is presented that remedies all previously encountered
  issues. While similar improved centrality measures have been
  proposed as algorithmic recipes, the novel quantity that is
  presented here is a purely analytical expression, only utilizing the
  adjacency matrix and the vector of node values. The derivation of
  the new centrality measure is motivated in detail. Specifically, the
  centrality itself is ideal for the analysis of directed and weighted
  networks (with node properties) displaying a bow-tie topology. The
  novel bow-tie centrality is then computed for a unique and extensive
  real-world data set, coming from economics. It is shown how the
  bow-tie centrality assesses the relevance of nodes similarly to
  other eigenvector centrality measures, while not being plagued by
  their drawbacks in the presence of cycles in the network.
\end{abstract}



\section{Introduction}

Centrality measures have a long history in the social sciences as a
structural attribute of nodes in a network
\cite{katz1953new,hubbell1965input,bonacich1972factoring,freeman1978centrality,bonacich1987power}.
The intuition is to identify important nodes depending on their
network position. To this day, centrality measures remain a
fundamental concept in network analysis
\cite{bonacich2001eigenvector,borgatti2006graph,newman2006structure} and find their
application in networks form physics and biology
\cite{freeman2008going} to economics
\cite{schweitzer2009economicsci,glattfelder2013decoding,glattfelder2019architecture}.

The centrality measure that is proposed here utilizes the maximal
information available from a network. For one, the links are expected
to be directed and can have weights. Moreover, it is assumed that the
nodes have an intrinsic degree of freedom. In detail, this is a
non-topological state variable describing a property value of the
nodes. An example of such a network is an ownership network, where the
links represent weighted and directed shareholding relations and some
of the nodes, representing firms, are assigned an economic value
\cite{glattfelder2009backbone,vitali2011network,glattfelder2013decoding,glattfelder2019architecture}.

In any directed network, each connected component (CC) can display a
bow-tie topology. This happens once a strongly connected component
(SCC) forms in the CC. In a SCC, each node is connected to each other
node in the SCC via a direct or indirect path. While a CC can contain
SCCs of various sizes, most real-world complex networks have a
dominant SCC, called a core. Once the core is defined, the bow-tie
naturally forms around it, as seen in Fig. \ref{fig:bt}, with the
various bow-tie components. Some real-world examples of complex
networks with a bow-tie topology are:
\begin{itemize}
\item ownership networks
  \cite{glattfelder2009backbone,vitali2011network,glattfelder2013decoding,glattfelder2019architecture}
\item the World Wide Web \cite{broder00,donato2008mining}
\item production networks \cite{fujiwara2008large}
\item online video social networks \cite{benevenuto2009video} 
\item Java online expertise forums  \cite{zhang2007expertise}
\end{itemize}

\begin{figure}[t]
\includegraphics[width=0.5\textwidth]{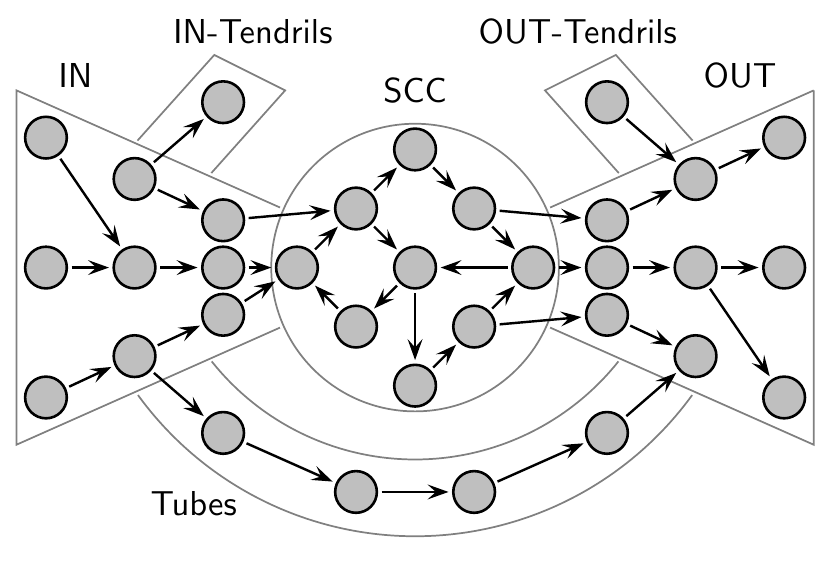}
\vspace*{8pt}
\caption{Bow-tie network topology.  An example consisting
  of an in-section (IN), an out-section (OUT), a strongly connected
  component (SCC) or core, and tubes and tendrils (TT).}
\label{fig:bt}
\end{figure}

\section{The Evolution of Centrality}

\subsection{Eigenvector Centrality}

A lot of attention has been devoted to feedback-type centrality
measures. These are based on the idea that a node is more central the
more central its neighboring nodes themselves are. This notion leads
to a set of equations which need to be solved simultaneously. In
general, this type of centrality is also categorized as eigenvector
centrality.  The colloquialism ``the importance of a node depends on
the importance of the neighboring nodes'' can be generically
quantified as
\begin{equation}
\label{eq:motto}
c_i = \sum_j A_{ij} c_j,
\end{equation}
where $A$ is the adjacency matrix of the graph and $c_i$ denotes the
centrality score of node $i$. More generally, by understanding $c$ as
an eigenvector of the adjacency matrix, the last equation can be
reformulated as
\begin{equation}
\label{eq:mottoev}
\lambda c = A c,
\end{equation}
with the eigenvalue $\lambda$. Note that Google's search engine is
based on a variant of eigenvector centrality, called PageRank
\cite{brin1998anatomy,page1999pagerank}.

Eigenvector-based centralities have been extensively studied in the
literature. For instance, \cite{bonacich2001eigenvector} introduced the following
variation
\begin{equation}
\label{eq:alpha}
\lambda c = \alpha A c + e,
\end{equation}
where $\alpha$ is a parameter and the vector $e$ represents an exogenous  source. If $e$
is assumed to be a vector of ones, the solution of Eq. (\ref{eq:alpha}) can be 
related to the well-known centrality measure introduced in \cite{katz1953new}.
Transitioning to weighted and directed graphs, a further refinement is given by the 
Hubbell index $c^H$  \cite{hubbell1965input}. Similarly to the term $e$ above, now the 
nodes are thought to posses an intrinsic importance $c^{0}$, to which
the importance from being connected to neighboring nodes is added.
In kinship to Eq. (\ref{eq:alpha}), the new centrality measure is defined as
\begin{equation}
\label{eq:hubbell-centrality}
c^H = W c^H +c^{0},
\end{equation}
where $W$ is the weighted adjacency matrix of the directed network. 
The solution is given by
\begin{equation}
\label{eq:hubbell-centralitysol}
c^H = (\mathbb{1}-W)^{-1} c^{0}.
\end{equation}
For the matrix $(\mathbb{1}-W)$ to be non-negative and non-singular, a
sufficient condition is that the Perron-Frobenius root is smaller than one,
\mbox{$\lambda(W)<1$}. This is ensured by the requirement that in each
strongly connected component $\mathcal{S}$ there exists at least one
node $j$ such that $\sum_{i\in \mathcal{S}} W_{ij}<1$
\cite{glattfelder2009backbone}. A similar centrality measure is found
in \cite{bonacich2001eigenvector}.

A final variant of eigenvector centrality, setting the stage for the new 
measure to be introduced in the following,  was defined in
\cite{bonacich1987power}
\begin{equation}
\label{eq:bondef}
c_i(\alpha,\beta) = \sum_j (\alpha +\beta c_j) A_{ij},
\end{equation}
with the solution
\begin{equation}
\label{eq:ac}
c(\alpha,\beta) = \alpha (\mathbb{1} - \beta A)^{-1} A e,
\end{equation}
and $e$ being the column vector of ones and $\alpha, \beta$ the
parameters to be chosen. This new centrality measure is essentially a
refinement of \cite{katz1953new} and \cite{hubbell1965input}. While it
was originally defined in terms of the adjacency matrix $A$, it can be
recast in the context of weighted and directed networks utilizing $W$.
Note that $W_{ij} \in [0,1]$ and $\sum_j W_{ij} \leq 1$ (i.e.,
column-stochastic).

Additional information on eigenvector-based centrality variants and general
node importance in complex networks can be found in \cite{martin2014localization}
and \cite{lu2016vital}, respectively.

\subsection{Centrality in Ownership Networks}

Prompted by the study of firms connected through a network of
cross-shareholdings \cite{brioschi1989risk,briosch1995equity}
proposed an algebraic model, based on the input-output matrix
methodology introduced to economics in \cite{leontief1986input}, to
calculate the value of the firms. This methodology can be generalized
to ownership networks and recast in the context of centrality
\cite{vitali2011network,glattfelder2013decoding}. The corresponding
equation defining the centrality $\chi$ is found to be
\begin{equation}
\label{eq:rec}
 \chi = W \chi + W v,
\end{equation}
where $v$ is a vector representing the economic values of the nodes.
Note, however, that $v$ can be a general non-topological node
property, allowing the methodology to be applied to generic networks.
Eq.  (\ref{eq:rec}) can be interpreted as follows: A node's centrality
score is given by the centrality scores of its neighbors plus the
neighbors intrinsic properties. The solution is given by
\begin{equation}
\label{eq:chi}
\chi = (\mathbb{1}- W)^{-1} W v =: \widetilde W  v.
\end{equation}
In other words, the centrality $\chi$ can be derived from the
centrality $c(\alpha,\beta)$ from Eq. (\ref{eq:ac}), by setting
$\alpha=\beta=1$ and replacing the vector $e$ with the node properties
defined by $v$.

Note that by using the series expansion 
\begin{equation}
\label{eq:iminaassum}
   (\mathbb{1}- W)^{-1} = \mathbb{1} + W + W^2 + W^3 +\dots
\end{equation}
one finds that
\begin{equation}
\label{eq:wtilddef}
\widetilde W =  (\mathbb{1}- W)^{-1} W = W  (\mathbb{1}- W)^{-1} = \sum_{n=1}^\infty W^n,
\end{equation}
and the centrality matrix equation is
\begin{equation}
\label{eq:matrixwtilddef}
  \widetilde W = W + W \widetilde W = W + \widetilde W W. 
\end{equation}

The centrality $\chi$ has a direct economic interpretation in terms of
the value of the total portfolio of shareholders
\cite{glattfelder2019architecture}. The direct portfolio 
value is the aggregated monetary value representing a shareholder's
investments. In detail, it is defined for a
shareholder $i$ as
\begin{equation}
\label{eq:pfv}
p_i^{\text{dir}} = \sum_{j \in \Gamma(i)} W_{ij} v_j,
\end{equation}
where $\Gamma(i)$ is the set of indices of the neighbors of $i$,
denoting all the companies in the portfolio. In the presence of a
network, the notion of the indirect portfolio naturally arises
\cite{vitali2011network,glattfelder2019architecture}. This is the
value found in the portfolio of portfolios. Specifically, the indirect
portfolio value is found by traversing all the
indirect paths reachable downstream from $i$
\begin{equation}
\label{eq:phat}
\begin{split}
p_i^{\text{ind}} = &\sum_{j \in \Gamma(i)} \sum_{k \in \Gamma(j)}  W_{ij} W_{jk} v_k + \cdots +\\
&\sum_{j_1 \in \Gamma(i)}\sum_{j_2 \in \Gamma(j_1)}   \cdots \sum_{j_{m-1} \in \Gamma(j_{m})} 
 W_{i j_1} W_{j_1 j_2} \cdots  W_{j_{m-1} j_m} v_{j_m} + \cdots.
\end{split}
\end{equation}
As a result, one can assign the sum of the direct and indirect
portfolio values to each shareholder, retrieving the  total
  portfolio value
\begin{equation}
\label{eq:dipf}
p_i^{\text{tot}} = p_i^{\text{dir}} + p_i^{\text{ind}}.
\end{equation}
In matrix notation, this can be re-expressed as
\begin{equation}
\label{eq:adpow}
p_i^{\text{tot}}  = \sum_{n=1}^{\infty} W^n v.
\end{equation}
By virtue of Eq. (\ref{eq:wtilddef}), it is found that $\chi =
p_i^{\text{tot}}$.

In generic terms, Eq.  (\ref{eq:rec}) can be interpreted in the
context of a system in which a resource (e.g., energy or mass) is
flowing along the directed and weighted links of the network. In this
picture, the intrinsic property value $v_i$ associated with the nodes
represents the quantity of the resource produced by them. Now $\chi_i$
measures the inflow of this resource which accumulates in node $i$
from all the nodes downstream
\cite{glattfelder2009backbone,vitali2011network}.

In the following, this eigenvector centrality variant $\chi$ is called
the {\it access centrality}, as it computes how much a node can access
the intrinsic properties of all other nodes reachable downstream via
the direct and indirect weighted links.

\subsection{The Problems}

It was realized that the access centrality $\chi$ suffers from
undesirable issues.  When the number of cycles in the network is
large, for example in the core of the bow-tie, the measure computes
overestimated results, due to the nodes' intrinsic property value
flowing many times through the cycles. A remedy was proposed, where,
in essence, links are removed in the computation
\cite{baldone1998ownership,rohwer2005tda}. In detail, Eq.
(\ref{eq:matrixwtilddef}) is adapted as follows:
\begin{equation}
\label{eq:intownfix}
\widehat W_{ij} = W_{ij} + \sum_{k \neq i} \widehat W_{ik}W_{kj}.
\end{equation}
This is equivalent to the introduction of a correction matrix
\cite{vitali2011network,glattfelder2013decoding}
\begin{equation}
\label{eq:D}
\mathcal{D} = \textrm{diag}  \left( (\mathbb{1}-W)^{-1} \right)^{-1}.
\end{equation}
Recall that $\textrm{diag} (A)$ is defined as the matrix of the
diagonal elements of the matrix $A$. The components of $\mathcal{D}$
are
\begin{subequations}
 \label{eq:Dscalarall}
\begin{gather}
\mathcal{D}_{kk} = \frac{1}{(\mathbb{1}-W)^{-1}_{kk}},  \label{eq:Dscalara} \\
\mathcal{D}_{ij} = 0,  \quad i \neq j. \phantom{L}\label{eq:Dscalar}
\end{gather}
\end{subequations}
The corrected centrality matrix equation is
\begin{equation}
\label{eq:whatmat}
\widehat W =  \mathcal{D} \widetilde W,
\end{equation}
and the new {\it corrected centrality} emerging from these manipulations
is
\begin{equation}
\label{eq:chihatmat}
\widehat \chi = \widehat W v.
\end{equation}
As a result, in networks with cycles, by construction, $\widehat
\chi_i \leq \chi_i$, otherwise $\widehat \chi_i = \chi_i$.

While this approach remedies the problem of overestimating the
centrality due to cycles, it introduces another problem
\cite{vitali2011network,glattfelder2013decoding,glattfelder2019architecture}.
Cutting links in the network tames the cycles but makes root nodes
dominant.  A single root node in a bow-tie will have the highest
corrected centrality score, regardless of the level of
interconnectivity in the network. This is an undesirable effect, as
the topology becomes irrelevant.

In \cite{glattfelder2019architecture} the authors propose a centrality
measure addressing the above mentioned issues. In detail, an algorithm
is described which computes the centrality score for each node, called
the {\it influence index} $\xi$. For its computation, only the trails
in the network are traversed. These are unique paths where each node
is only visited once, thus terminating any further flow through
cycles.  In other words, for each iteration of the algorithm, the
calculation considers the jump from node to node along the directed
links until either a terminating leaf node is reached or a node that
was visited some steps earlier is detected (the result of a cycle).
In effect, the cycles in the network are cut.  This algorithmically
computed centrality represents a lower bound to the access centrality
$\xi_i \leq \chi_i$. It should be noted that the influence index
offers an algorithmic solution to the above mentioned problems.  In
this sense, it is a desirable centrality measure. The analytical
bow-tie centrality emulates its properties.

\subsection{An Example}

\begin{figure}[t]
\includegraphics[width=0.55\textwidth]{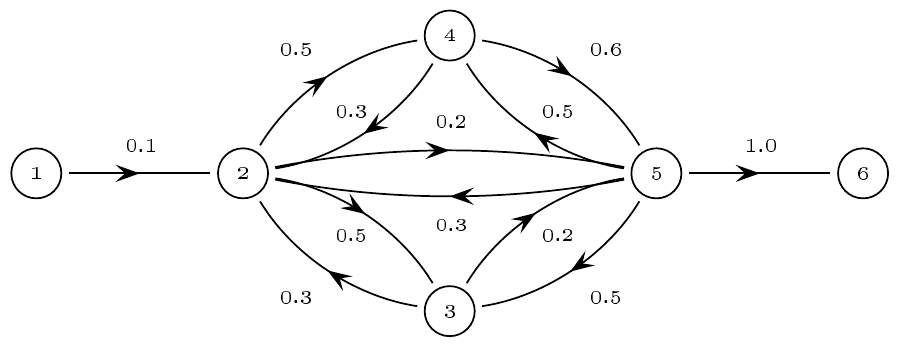}
\vspace*{8pt}
\caption{ Simple bow-tie network
  example with a high degree of interconnectedness of nodes  in the
  SCC. All nodes have unit value $v_i=1$. Reproduced from
  \cite{vitali2011network}.
}\label{fig:netexb}
\end{figure}

Fig. \ref{fig:netexb} presents the bow-tie example introduced
in \cite{vitali2011network,glattfelder2013decoding}. One finds for $v_i=1$
\begin{equation}
\chi = \left( 
\begin{array}{r}
    5\\
   49\\
   26\\
   48\\
   54\\
         0
\end{array} \right) , \ 
\widehat \chi = \left( 
\begin{array}{r}
    5.000\\
    4.900\\
    4.216\\
    4.571\\
    4.629\\
         0.000
\end{array} \right),
 \ 
\xi = \left( 
\begin{array}{r}
    0.360\\
    2.600\\
    1.400\\
    2.520\\
    3.050\\
         0.000
\end{array} \right).
\end{equation}
This simple example highlights the mentioned problems. Namely, the
overestimation of nodes in cycles affecting $\chi$ (i.e.,
$\chi_2$--$\chi_5$ having large values) and the dominance of root
nodes affecting $\widehat \chi$ (i.e., $\widehat \chi_1$ being larger
than $\widehat  \chi_2$ -- $\widehat  \chi_6$).  Moreover, it is confirmed
that $\xi$ indeed acts as the lower bound for the calculations.


\section{Deriving the Bow-Tie Centrality}

While the access centrality $\chi$ and the corrected centrality
$\widehat \chi$ are well-defined measures with clear interpretations,
in practice, as mentioned, they have drawbacks when applied to
weighted and directed networks with many cycles.  Another measure, the
influence index $\xi$, while remedying the problems can only be
defined algorithmically. In essence, what is missing in the literature
is an analytical expression (utilizing the adjacency matrix) which
allows a centrality score to be computed that is similar to the
influence index (and hence is also not plagued by the problems
detailed above) and which can be applied to bow-tie networks (with
intrinsic node properties).  In the following, such a new centrality
measure is derived.

Let the auxiliary matrix $V$ be defined as
\begin{equation}
\label{eq:vdef}
V =   (\mathbb{1}- W)^{-1}.
\end{equation}
This allows some equations to be re-expressed. Eq. (\ref{eq:wtilddef}) is now
\begin{equation}
\label{wq:wtildev}
\widetilde W = W V
\end{equation}
and  Eq. (\ref{eq:whatmat}) becomes
\begin{equation}
\label{wq:whatv}
\widehat W = \mathcal{D} \widetilde W = \mathcal{D} W V,
\end{equation}
highlighting how the correction matrix impacts the original centrality
matrix. However, the correction matrix $\mathcal{D}$ could be applied
at a different position in Eq. (\ref{wq:whatv}), unveiling yet
another corrected eigenvector centrality variant
\begin{equation}
\label{wq:wbarev}
\overline  W = W \mathcal{D} V.
\end{equation}
In essence, the non-commutative nature of matrix multiplication
$\mathcal{D} W \neq W \mathcal{D}$ results in two analytical
expressions. It should be noted that in mathematics and physics, 
non-commutative behavior is a source of rich
structure \cite{connes1994noncommutative,seiberg1999string,douglas2001noncommutative}.

For the new centrality measure, one finds
\begin{equation}
\label{wq:wbar}
\overline  W = W \mathcal{D}  (\mathbb{1}- W)^{-1} =  W \mathcal{D}  \left(W (\mathbb{1}- W)^{-1}+\mathbb{1} \right)
= W \mathcal{D}  (\widetilde W + \mathbb{1}) = W (\widehat W +  \mathcal{D}),
\end{equation}
by noting that from Eq. (\ref{eq:iminaassum}) 
\begin{equation}
\mathbb{1} + W + W^2 + \dots = W ( \mathbb{1} + W + W^2 + \dots) + \mathbb{1}.
\end{equation}

In a nutshell, the new centrality matrix is
\begin{equation}
\label{eq:Wbar}
\overline W =  W  W^\ast,
\end{equation}
with
\begin{equation}
W^\ast = \widehat{W} + \mathcal D, 
\end{equation}
or in scalar notation
\begin{equation}
W^{\ast}_{ij} = 
\begin{cases}
1,  &i=j, \\
\widehat{W}_{ij},  &i \neq j.
\end{cases}
\end{equation}
The final resulting centrality measure, called the {\it bow-tie
  centrality}, is 
\begin{equation}
\label{eq:zeta}
\zeta  = \overline W v.
\end{equation}
It is an analytical expression that does not overestimate the importance
of nodes in cycles and root nodes.

In the example seen in Fig. \ref{fig:netexb} one finds
\begin{equation}
\chi = \left( 
\begin{array}{r}
    5\\
   49\\
   26\\
   48\\
   54\\
         0
\end{array} \right) , \ 
\zeta = \left( 
\begin{array}{r}
    0.500\\
    5.465\\
    2.443 \\
    4.329\\
   7.023 \\
         0.000
\end{array} \right),
 \ 
\xi = \left( 
\begin{array}{r}
    0.360\\
    2.600\\
    1.400\\
    2.520\\
    3.050\\
    0.000
\end{array} \right).
\end{equation}

\begin{theorem}\label{thm1}
  The bow-tie centrality measure $\zeta$ is bounded by the access
  centrality $\chi$ and the influence index $\xi$:
\begin{equation}
\chi_i \geq \zeta_i \geq \xi_i .
\end{equation}
\end{theorem}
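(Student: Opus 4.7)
The plan is to prove the two inequalities separately after a common algebraic preprocessing step. The first step is to rewrite $\overline{W}$ in a form that exposes its relation to $\chi$. Using $\widehat{W} = \mathcal{D}\widetilde{W}$ together with $V = \mathbb{1} + \widetilde{W}$ (which follows from Eq.~(\ref{eq:iminaassum})), one obtains
\begin{equation*}
\overline{W} = W(\widehat{W} + \mathcal{D}) = W\mathcal{D}(\widetilde{W} + \mathbb{1}) = W\mathcal{D}V,
\end{equation*}
so that $\zeta = W\mathcal{D}Vv = W\mathcal{D}(v+\chi)$, whereas Eq.~(\ref{eq:rec}) gives $\chi = W(v+\chi)$. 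The two centralities therefore differ only by an intermediate diagonal factor $\mathcal{D}$.

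For the upper bound $\chi_i \geq \zeta_i$, I would observe that $V = \mathbb{1} + W + W^2 + \cdots$ has entrywise non-negative entries, with its diagonal inheriting a contribution of $1$ from the leading $\mathbb{1}$, so $V_{jj} \geq 1$ and hence $\mathcal{D}_{jj} = 1/V_{jj} \in (0,1]$ for every $j$. Since $W$, $v$, and $\chi$ are entrywise non-negative and $\mathcal{D}$ acts as a diagonal contraction, $W\mathcal{D}(v+\chi) \leq W(v+\chi)$ componentwise, which yields $\zeta_i \leq \chi_i$.

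For the lower bound $\zeta_i \geq \xi_i$, the argument must be combinatorial because $\xi$ is only defined algorithmically. I would interpret $\xi_i$ as a sum, over all trails (self-avoiding walks) starting at $i$, of the product of edge weights along the trail multiplied by the terminal intrinsic value. Then I would expand $\zeta$ via $W\mathcal{D}V = \sum_{n \geq 0} W\mathcal{D}W^n$, viewing it as a sum over all walks from $i$ of length at least one in which the walk's second vertex $j$ carries a discount $\mathcal{D}_{jj} = 1/V_{jj}$ accounting for the closed walks at $j$. The proof would then exhibit an injection from the trails contributing to $\xi_i$ into walks contributing to $\zeta_i$ whose aggregated weights dominate the trail weight. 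The main obstacle is that the $\mathcal{D}_{jj}$ discount is applied uniformly to every walk through $j$, whereas a trail visits $j$ at most once; one must therefore aggregate the contributions of all longer walks sharing the same initial segment to compensate for the $1/V_{jj}$ suppression and show that the net contribution meets or exceeds the trail's bare weight.
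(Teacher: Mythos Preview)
Your argument for the upper bound $\chi_i \geq \zeta_i$ is correct and coincides with the paper's: both use $\overline{W} = W\mathcal{D}V$, observe from the Neumann series that $V_{kk} \geq 1$ so $\mathcal{D}_{kk} \in (0,1]$, and conclude that inserting the diagonal contraction $\mathcal{D}$ between $W$ and $V$ can only decrease each nonnegative entry. Your packaging $\zeta = W\mathcal{D}(v+\chi) \leq W(v+\chi) = \chi$ is the same computation the paper does componentwise on $\widetilde{W}_{ij}$ versus $\overline{W}_{ij}$.

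For the lower bound $\zeta_i \geq \xi_i$, the paper does not give a rigorous proof either: it argues informally that the influence-index algorithm, by traversing only trails, collects ``the minimum possible contribution from cycles,'' whereas $\zeta$, being built from powers of $W$, incorporates strictly more, and leaves it at ``by construction.'' Your combinatorial plan---reading $\xi_i$ as a weighted sum over self-avoiding walks and injecting these into the walk expansion $\sum_{n\geq 0} W\mathcal{D}W^n v$---is a more ambitious attempt at rigor than the paper offers, but it remains a sketch. You correctly flag the obstacle (the uniform discount $\mathcal{D}_{jj}$ at the second vertex must be compensated by aggregating longer walks with the same prefix) and then stop; the required comparison, roughly $\mathcal{D}_{jj}\,(Vv)_j \geq \sum_{\text{trails from } j} (\text{weight})\cdot v_{\text{end}}$, is not established and is not obviously true vertex-by-vertex, since trails from $j$ and closed walks at $j$ can overlap in complicated ways. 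So on this half your proposal and the paper are in the same state: a plausible heuristic rather than a completed argument.
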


\begin{proof}
  Recall that the adjacency matrix of an ownership network is
  column-stochastic if all the ownership information is known. In
  general, $\sum_j W_{ij} \leq 1$ and $W_{ij} \in[0,1]$. The matrix $V$
  is defined in Eq. (\ref{eq:vdef}) and from Eq. (\ref{eq:iminaassum}),
  $V_{ij} \geq 0$.  

  The first inequality translates into $ \widetilde W_{ij} v_j \geq
  \overline W_{ij} v_j$. Componentwise
\begin{equation}
 \overline  W_{ij} = \sum_{k,l} W_{ik} \mathcal{D}_{kl} V_{lj}
\stackrel{Eq. (\ref{eq:Dscalar})}{=} \sum_k W_{ik} \mathcal{D}_{kk} V_{kj} \leq
 \sum_k W_{ik} V_{kj}  =  \widetilde W_{ij}.
\end{equation}
The inequality follows by observing that from Eq. (\ref{eq:Dscalara}),
$\mathcal{D}_{kk} = V_{kk}^{-1} \in ]0,1]$, because $V_{kk} = 1 + W_{kk} +
\dots \geq 1$, and thus $\mathcal{D}_{kk} V_{kj} \leq  V_{kj}$.

The second inequality arises by construction. In the absence of
cycles, all the centrality measures are identical. In the presence of
cycles, the influence index algorithm only traverses the trails in the
network.  These are unique paths where each node is only visited once,
thus terminating any further flow through cycles. In other words,
cycles are never actually fully traversed and the computation stops
one node before completion.  This is the minimum possible contribution
from cycles. The bow-tie centrality is defined using powers of $W$,
resulting in paths of various lengths being considered. In essence,
contributions from cycles are incorporated in the computation,
exceeding the bare minimum arising from the trails. It should be noted
that if a centrality measure yields lower values than the influence
index this means that information from the cycles has been lost in the
computation.
\end{proof}

\section{Empirical Application}

A prototypical bow-tie network can be found in the global ownership
network \cite{glattfelder2019architecture}. Ownership networks are
comprised of economic actors which are connected via ownership
relations (i.e., by holding a percentage of a corporations' equity).
Shareholders can be other firms, natural persons, families,
foundations, research institutes, public authorities, states, and
government agencies. Ownership networks are directed and weighted, and
some of the corporations have an intrinsic node property, coming in
the guise of an economic value (e.g., the firm's operating revenue in
USD). Typical for ownership networks is the emergence of a tiny but
highly interconnected core (SCC) of influential shareholders. Hence,
this is an ideal real-world use-case for the bow-tie centrality
measure.

The global ownership information is taken from Bureau van Dijk's Orbis
database\footnote{See
  \url{http://www.bvdinfo.com/en-gb/our-products/company-information/international-products/orbis}.}.
In \cite{glattfelder2019architecture}, six yearly network snapshots
were constructed and analyzed from this. Here, we focus on the largest
connected component of 2012. In detail, we analyzes the IN, the SCC,
and the OUT, omitting the TT. From a set of 35,839,090 nodes and
27,307,642 links, the largest connected component was identified as
being comprised of 5,933,836 nodes. In the following, a subnetwork of
the 2012 global ownership network is analyzed. By considering all of
the IN and SCC nodes, plus all OUT nodes with an operating revenue
larger or equal to USD 100,000,000 (i.e., $v_i^{\text{out}} \geq 100$
million), a reduced ownership network is retrieved. It contains 64,266
nodes and 540,405 links. In Table \ref{tab:nwdyn} the bow-tie
component sizes are shown.  There are 52,001 nodes with an operating
revenue $v_i >0$ contained in the reduced network, totaling USD
84,287,655,740,000.
This represents 66.30\% of the total global operating revenue of the
entire global ownership network, which is approximately USD 127
trillion \cite{glattfelder2019architecture}.

\begin{table}[t!]
\center
\begin{tabular}{ l | r}
  IN  &13,374\\
  SCC& 2,554\\
  OUT&48,338\\
    \hline   
Total &  64,266\\
\end{tabular} 
\caption{
Bow-tie components of the reduced global ownership network of 2012. The number of nodes in
    the various components of the network are shown, totaling 64,266 nodes. The different
    centrality measures are applied to this empirical network.
}
  \label{tab:nwdyn}
\end{table}

\subsection{Comparing the Rankings}

For the empirical analysis, all the discussed centrality measures are
computed for this network. Namely
\begin{enumerate}
\item the access centrality  $\chi = (\mathbb{1}- W)^{-1} W v = \widetilde W v$
\item the corrected centrality $\widehat \chi =  \mathcal{D} \widetilde W v = \widehat W v$
\item the bow-tie centrality $\zeta = W (\widehat{W} + \mathcal D) v = \overline W v$
\item the algorithmic influence index $\xi$.
\end{enumerate}
As a result, there are six possible comparisons of the rankings. By
utilizing the Jaccard index, a statistic used for assessing the
similarity of sets \cite{jaccard1912distribution}, such a comparison
can be quantified. The index is defined as
\begin{equation}
\mathcal{J}(A,B) = {{|A \cap B|}\over{|A \cup B|}},
\end{equation}
for two sets $A$ and $B$.  A value of one indicates a total overlap of
the sets, while zero denotes no similarity. 

The truncated Jaccard index $\mathcal{J}_{n}$ is employed to uncover
the similarity between the $n$ highest ranked nodes of two centrality
measures \cite{glattfelder2019architecture}. For instance,
$\mathcal{J}_{n} (\zeta, \chi)$ compares the bow-tie centrality with
the access centrality for the firs $n$ nodes ranked by centrality.
$\mathcal{J}_{n}$ can then be plotted for all centrality comparisons
of all lengths.


%
\begin{figure}[t]
\includegraphics[width=0.5\textwidth]{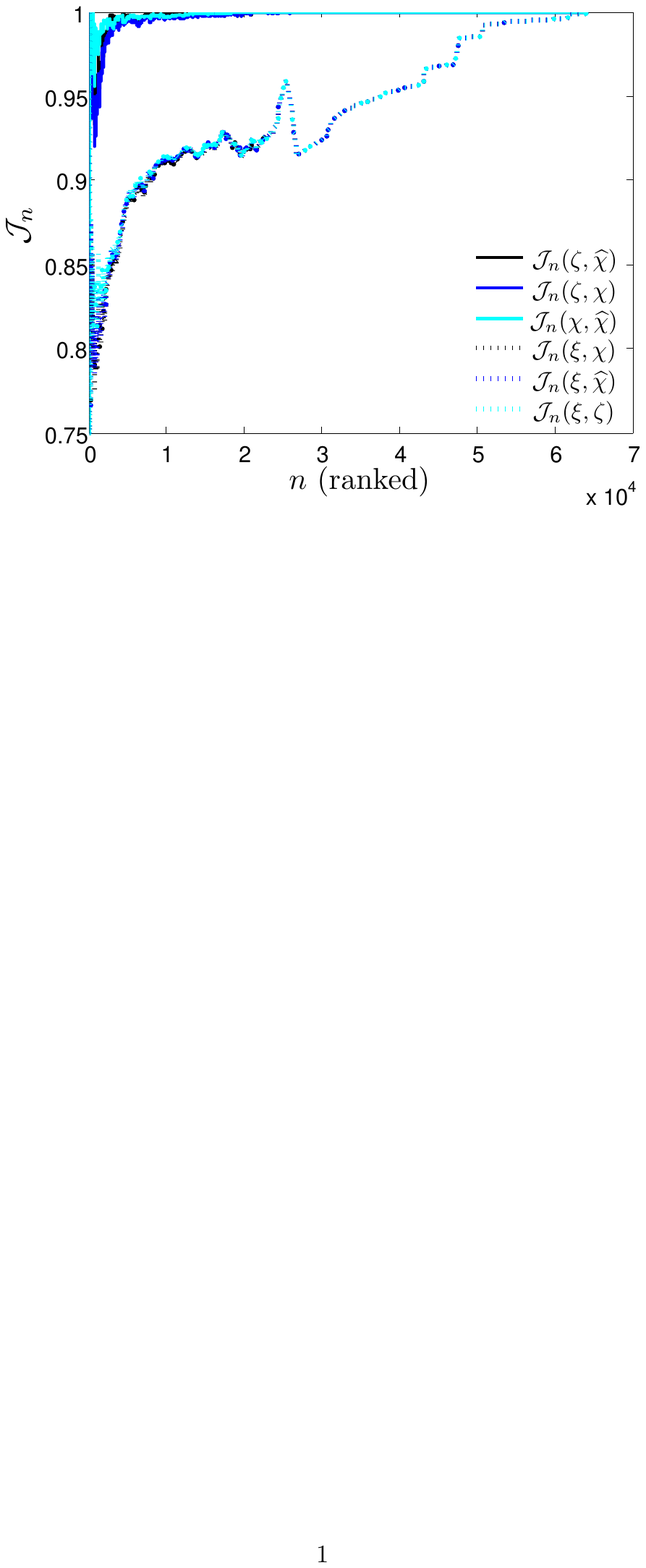}
\vspace*{8pt}
\caption{Jaccard indices comparing the centrality rankings. The
  similarity of the four network centrality measures is visualized by
  computing the Jaccard index $\mathcal{J}_{n}$ for all ranked sets of
  length $n=1,\dots, 64,266$. The $x$-axis shows the number of
  centrality-ranked nodes, where $n=1$ represents the node with the
  highest centrality.  See discussion in text.}
\label{fig:jacc}
\end{figure}

Fig. \ref{fig:jacc} shows the results of this computation. The three
analytical centrality measures $\chi, \widehat \chi$, and $\zeta$
display a high similarity among themselves. In contrast, the
algorithmic centrality $\xi$ shows a pronounced dissimilarity with
respect to the analytical measures. The two centrality variants based
on $\chi$, utilizing the correction matrix $\mathcal{D}$, show the
highest similarity.  Symbolically, $\zeta \sim \widehat \chi$. This
implies that the way the bow-tie centrality adjusts for cycles yields
similar results as the corrected centrality. Recalling that both
$\chi$ and $\widehat \chi$ are affected by topology-dependent issues,
the bow-tie centrality $\zeta$ emerges as a superior analytical
centrality measure, while still accurately reflecting the behavior of
the access centrality $\chi$.

\begin{figure}[t]
\centering
\includegraphics[width=0.5\textwidth]{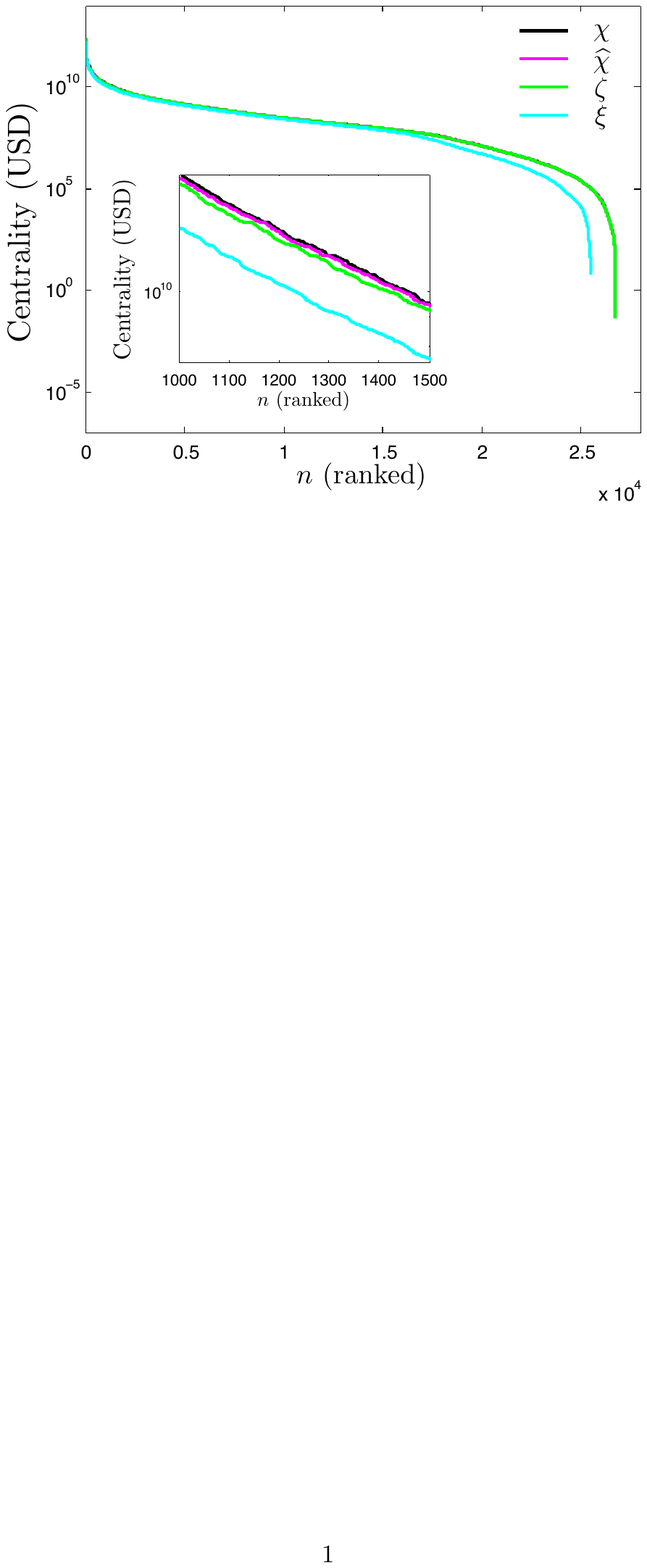}
\vspace*{8pt}
\caption{Ranked centrality value semi-log plots. For all four centrality
  measures, their monetary value is shown for each node. The inset
  shows the details for 500 nodes ranked between 10,000 and 15,000.}
\label{fig:cent}
\end{figure}

By further comparing the bow-tie centrality to the influence index,
the following is revealed. Of the 64,266 nodes in the reduced
ownership network, 26,758 have a non-zero analytical centrality.
Symbolically, $|\{ \zeta_i > 0 \}| = |\{ \chi_i > 0\}| = |\{ \widehat
\chi_i > 0\}| = 26,758 $. For the algorithmic influence index, the
number is the following $|\{ \xi_i > 0 \}| = 25,512$. From Theorem
\ref{thm1} it is known that the influence index is a lower bound for
the bow-tie centrality, i.e., $\zeta_i \geq\xi_i$. In Fig.
\ref{fig:cent} the monetary values of each node for each centrality
variant is shown in ranked plots.  Indeed, in the network at hand, the
algorithmic centrality is below the analytical ones, which all have
similar values.

\subsection{Top-Ten Rankings}

In \cite{glattfelder2019architecture} the entire global ownership
network was analyzed. In other words, the centrality scores for
35,839,090 nodes was algorithmically evaluated (i.e., $\xi_i$).  From
a computational perspective, calculating 
the analytical centrality measures for such a matrices can be
challenging. 

However, by focusing on the core structures of the network, a smaller
representation can be found. The reduced ownership network analyzed
here, comprised of 64,266 nodes, represents 66.30\% of the total
global operating revenue. A key question to be answered is how
representative this subnetwork is? Moreover, as a different centrality
measure was used, are the two resulting rankings comparable?  In other
words, how similar is the influence index, computed for the entire
network, to the bow-tie centrality, calculated for the reduced
network? 

In Table \ref{t:comprank} the two centralities are compared for the
top-ten list. While the individual rankings are expected to change,
the group of actors is very similar. Both centrality measures
crucially identify a similar set of most relevant nodes in the
network.  Furthermore, although there is not as much value in the
reduced network, the bow-tie centrality is a less conservative
estimation resulting in higher centrality values.  

In summary, this result can be seen as evidence that it suffices to
focus on the relevant part of the network for quantitative analysis.
In this case it was a combination of topology (all the nodes in the IN
and SCC) and value (nodes in the OUT with a minimum of 100 million
operating revenue in USD).  For rapid testing of ideas and
approximating indicators, this approach can be invaluable.  The
analytical bow-tie centrality is straightforward to apply while the
implementation of the algorithmic influence index required the
utilization of a graph database.

\begin{table}[t]
\center
\footnotesize
\begin{tabular}{ lr |  lr } 
\multicolumn{2}{ c }{\bf Influence Index} (entire network)& \multicolumn{2}{ c }{\bf Bow-Tie Centrality} (reduced network)\\
Name                           &            $\xi$  (t USD)  & Name                         &           $\zeta$   (t USD)   \\ \hline  \\ [-10.5px]
   BLACKROCK INC              & 2.177           &BLACKROCK INC&  2.421 \\
 VANGUARD GROUP INC               & 1.314       & VANGUARD GROUP INC  &   1.516       \\
  GOVERNMENT OF NORWAY          & 1.220        &STATE STREET CORP  &   1.359     \\
  SASAC                           & 1.210         &  GOVERNMENT OF NORWAY &  1.299\\
 CAPITAL GROUP    COMPANIES     & 1.201       & CAPITAL GROUP COMPANIES &    1.246    \\
 STATE STREET CORP               &1.190          &BPCE SA&  1.155    \\
 GOVERNMENT OF FRANCE            &0.982       & FMR LLC  &  1.096    \\
 FMR LLC                        &0.955             & BARCLAYS PLC &   0.796\\
 BARCLAYS PLC               &0.675           &JP  MORGAN CHASE  \& CO& 0.715     \\
 ROYAL DUTCH SHELL PLC              &0.619   &T. ROWE PRICE GROUP INC&  0.643   
\end{tabular}
\caption{Comparing the top-ten rankings. The exhaustive results taken from  \citep{glattfelder2019architecture} are seen on the left-hand side. The right-hand side shows the results from the reduced network utilizing the bow-tie centrality from Eq. (\ref{eq:zeta}). The values are in trillion USD.}
\label{t:comprank}
\end{table}

\section{Conclusion}

Which are the most important nodes in a network? This question has a
long history in network science and there are different ways of
approaching it. For instance, algorithms can be developed which
traverse the network and compute the centrality scores of the nodes.
While such an approach requires a computational framework, it can be
applied to very large networks. A more classical approach is to
utilize equations. In this analytical context many centrality measures
have been proposed. However, for a relevant class of directed and
weighted real-world networks, characterized by a bow-tie topology,
these centralities suffer from drawbacks.  Specifically, the emergence
of cycles represents a formidable challenge.

We introduce a novel centrality measure ideally applied to networks
displaying a bow-tie topology. The quantity represents a final
iteration in a stream of research originating from the study of
ownership networks
\cite{brioschi1989risk,briosch1995equity,baldone.ea98,rohwer2005tda,glattfelder2009backbone,vitali2011network,glattfelder2013decoding,glattfelder2019architecture}.
The new centrality measure can be applied in general to weighted and
directed complex networks, where the nodes carry an intrinsic
non-topological degree of freedom. The ideal domain of application are
such networks, where the bow-tie contains important nodes. Here, older
centrality measures overestimate the relevance of such nodes (in the
SCC and IN bow-tie components) and blur important features. Indeed,
the bow-tie centrality yields a precise score for every node in the
network, regardless of its location in the bow-tie.

The bow-tie centrality can be clearly motivated analytically and does
not possesses the undesirable features plaguing older variants (such
as the access and corrected centralities). Comparing these different
centrality measure to each other reveals that the novel bow-tie
centrality achieves this while still capturing the desired features of
ownership-inspired eigenvector centralities.  In essence, this
centrality represents the analytical counterpart to an algorithmic
implementation used to decode empirical ownership networks, namely the
influence index \cite{glattfelder2019architecture} (or an older, more
cumbersome variant found in \cite{vitali2011network}).

Finally, it was demonstrated that large networks can be reduced to
smaller subsets which, when analyzed, show very similar properties as
the whole. In essence, the characteristic features of a network are
encoded in the subnetwork of important nodes, making the detection of
this backbone crucial.  To this aim, better centrality measures are
important.

\section*{Acknowledgments}
I would like to thank Stefano Battiston and Borut Sluban for their
support.


\bibliographystyle{abbrvnat}
\bibliography{ref}

\end{document}